\documentclass[journal,english,a4paper]{IEEEtran}
\usepackage[T1]{fontenc}
\usepackage[latin1]{inputenc}
\usepackage{graphicx}
\usepackage{amssymb}
\usepackage{amsmath}
\makeatletter

\providecommand{\boldsymbol}[1]{\mbox{\boldmath $#1$}}
\newcommand{\ZZ}{\mathbb{Z}}
\newcommand{\RR}{\mathbb{R}}
\newcommand{\CC}{\mathbb{C}}
\newcommand{\SC}{\mathcal{S}}
\newcommand{\rcomp}{R_{\mathrm{comp}}}

\newtheorem{theorem}{Theorem}

\newtheorem{proposition}{Proposition}

\usepackage{babel}
\makeatother

\begin{document}

\title{The Compute-and-Forward Protocol: Implementation and Practical Aspects}
\author{Ali Osmane and Jean-Claude Belfiore\\
TELECOM ParisTech, Paris, France\\
Email:
\texttt{\{osmane,belfiore\}@telecom-paristech.fr}}

\maketitle
\begin{abstract}
In a recent work, Nazer and Gastpar proposed the Compute-and-Forward strategy
as a physical-layer network coding scheme. They described a code 
structure based on nested lattices whose algebraic structure makes the scheme reliable and
efficient. In this work, we consider the implementation of their scheme for real Gaussian channels
and one dimensional lattices. 
We relate the maximization of the transmission rate to the lattice shortest vector problem. We explicit, 
in this case, the maximum likelihood criterion and show that it can be implemented by using an 
Inhomogeneous Diophantine Approximation algorithm.
\end{abstract}

\section{Introduction}
In \cite{PNC_ZLL}, Zhang \textit{et al.} introduced the physical-layer network
coding concept (PNC) in order to turn the broadcast property of the wireless
channel into a capacity boosting advantage. Instead of considering the
interference as a nuisance, each relay converts an interfering signal into a
combination of simultaneously transmitted codewords. PNC concept has received a
particular interest in the last years because it provides means of embracing
interference and improving network capacity.

In a recent work \cite{CF_Bob_Gas}, Nazer and Gastpar proposed a new
physical-layer network coding scheme. Their strategy, called
compute-and-forward (CF), exploits interference to obtain higher end-to-end
transmission rates between users in a network. The relays are required to decode
noiseless linear equations of the transmitted messages using the noisy linear
combination provided by the channel. The destination, given enough linear
combinations, can solve the linear system for its desired messages. This strategy is based on the
use of structured codes, particularly nested lattice codes to ensure that
integer combinations of codewords are themselves codewords. The authors
demonstrated its asymptotic gain using information-theoretic tools.

The authors in \cite{CF_FSK} followed the framework of Nazer and Gastpar and
showed the potential of the compute-and-forward protocol using an algebraic
approach. They related the Nazer-Gastaper's approach to the theorem of finitely
generated modules over a principle ideal domain (PID). They gave sufficient
condition for lattice partitions to have a vector space structure which is a
desirable property to make them well suited for physical-layer network coding.
Then, they generalized the code construction and developed encoding and decoding
methods.

In \cite{CF_DF_NW}, the authors proved that the lattice implementation of
compute-and-forward as proposed by Nazer and Gastpar suffers from a loss in
number of achieved degrees of freedom. They proposed a different implementation
consisting of a modulation scheme and an outer code and showed that it achieves
full degrees of freedom as if full cooperation among transmitters and among
relays was permitted. In their scheme, the channel coefficients are known
throughout the network. In \cite{CF_MCS_HN}, the authors designed a
modulation/coding scheme inspired by the compute-and-forward protocol for the
wireless two-way relaying channel.

In this work, we consider the practical aspects of the compute-and-forward
protocol. We implement the protocol described by Nazer and Gastpar. We
explain how to obtain the integer coefficients that maximize the rate. We also
propose a decoding technique based on maximum likelihood. Finally,
we show some simulation results. All the practical aspects are demonstrated here for one-dimensional 
real constellations. 

\begin{figure}
\centering
\includegraphics[width=0.55\linewidth]{./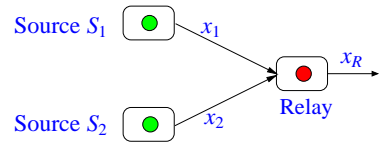}
\caption{System model: 2 sources and one relay.}
\label{fig:Sys_Model}
\end{figure}
%

\section{System Model and Assumptions}
In our model, we consider one relay receiving messages from two sources $S_1$
and $S_2$ and transmitting a linear combination of these two messages, as described in Figure \ref{fig:Sys_Model}. 
The relay observes a noisy linear combination of the transmitted
signals through the channel. Received signal at the relay is expressed as, 
\begin{equation}
 y = h_1 x_1 + h_2 x_2 + z.
\end{equation}
The relay searches for the integer coefficient vector $\boldsymbol{a}= [a_1
\hspace{0.1cm} a_2]^T$ that maximizes the transmission rate. It then decodes a
noiseless linear combination of the transmitted signals, 
\begin{equation}
 x_R = a_1 x_1 +a_2 x_2, 
\end{equation}
and retransmits it
to the destination or another relay. 
We consider a real-valued channel model with real inputs and outputs. 
The channel coefficients $h_1$ and $h_2$ are real, i.i.d. Gaussian, 
$h_i \sim \mathcal{N}(0, 1)$. $z$ is Gaussian, zero mean, with variance $\sigma^2=1$ ($ z \sim \mathcal{N}(0,1)$). 
Let $\boldsymbol{h}= [h_1
\hspace{0.1cm} h_2]^T$ denotes the vector of channel coefficients. Source
symbols $x_i$ are integers and verify $\lvert x_i \rvert \leq s_m$, i.e.,
$x_i \in \SC=\left\lbrace -s_m, -s_m+1, \hdots, s_m \right\rbrace $. $S_1$ and
$S_2$ transmit $x_1$ and $x_2$, respectively. Both sources have no channel side
information (CSI). CSI is only available at the relay.

%

\section{Compute-and-Forward}
In what follows, we use the expression of the computation rate $\rcomp$ given by Nazer and Gastpar \cite{CF_Bob_Gas} in 
order to find a vector $\boldsymbol{a}$ maximizing it. We show that the maximization of $\rcomp$ is 
equivalent to the search of a shortest vector in a lattice. Then, based on the likelihood expression,
we show that decoding is equivalent to an Inhomogeneous Diophantine
Approximation.

\subsection{Achievable Computation Rate}
The primary goal of the decode-and-forward is to enable higher achievable rates
across the network. Nazer and Gastpar showed that the relays can recover any set
of linear equations with coefficient vector $\boldsymbol{a}$ as long as the
message rates are less than the computation rate
\begin{equation} \label{comp_rate}
 \rcomp(\boldsymbol{h}, \boldsymbol{a}) = \log \left( \left( \lVert \boldsymbol{a}
\rVert^2 - \dfrac{\mathsf{SNR}\lvert \boldsymbol{h}^\dagger
\boldsymbol{a}\rvert^2}{1 + \mathsf{SNR} \lVert \boldsymbol{h}
\rVert^2}\right)^{-1} \right) 
\end{equation}
where this rate is achievable by scaling the received signal by the MMSE coefficient \cite{CF_Bob_Gas}. 
We are interested in finding the coefficient vector with the highest computation
rate. This is given in the following theorem. The result is obtained for a
relay combining $N$ symbols and for complex-valued channels. 
\begin{theorem}\label{theorem-a}
 For a given $\boldsymbol{h} \in \CC^N$ (resp. $\RR^N$), $\rcomp(\boldsymbol{h}, \boldsymbol{a})$ 
is maximized by choosing $\boldsymbol{a} \in \ZZ[i]^N$ (resp. $\ZZ^N$) as
\begin{equation} \label{max_rate_th}
 \boldsymbol{a} = \arg \min \limits_{\boldsymbol{a}\neq\boldsymbol{0}} \left(\boldsymbol{a}^{\dagger}
\boldsymbol{G} \boldsymbol{a} \right )
\end{equation}
where
\begin{equation}
 \boldsymbol{G}= \boldsymbol{I} - \dfrac{\mathsf{SNR}}{1
+ \mathsf{SNR} \lVert \boldsymbol{h} \rVert^2} \boldsymbol{H}.
\end{equation}
$\boldsymbol{H} = [H_{ij}]$, $H_{ij}=h_i h_j^{\star}$, $1 \leq i, j \leq N$ and $\dagger$ is for the Hermitian transpose 
(resp. the regular transpose).
\end{theorem}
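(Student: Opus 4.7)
The plan is to reduce the rate maximization to a quadratic minimization by direct algebraic manipulation, and then observe that the resulting quadratic form is positive definite so that the problem is genuinely a shortest vector problem on the integer (or Gaussian integer) lattice.

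First, because $\log(\cdot)$ is strictly increasing and the argument of the logarithm in \eqref{comp_rate} is the reciprocal of a nonnegative quantity, maximizing $\rcomp(\boldsymbol{h},\boldsymbol{a})$ over $\boldsymbol{a}\neq\boldsymbol{0}$ is equivalent to minimizing
\begin{equation}
Q(\boldsymbol{a}) \;=\; \lVert \boldsymbol{a}\rVert^2 \;-\; \frac{\mathsf{SNR}\,\lvert \boldsymbol{h}^\dagger\boldsymbol{a}\rvert^2}{1+\mathsf{SNR}\,\lVert\boldsymbol{h}\rVert^2}.
\end{equation}
I would then rewrite each term as a Hermitian (resp. symmetric) form in $\boldsymbol{a}$: the first term is $\boldsymbol{a}^\dagger \boldsymbol{I}\boldsymbol{a}$, and for the second I would use $\lvert \boldsymbol{h}^\dagger\boldsymbol{a}\rvert^2=(\boldsymbol{a}^\dagger\boldsymbol{h})(\boldsymbol{h}^\dagger\boldsymbol{a})=\boldsymbol{a}^\dagger(\boldsymbol{h}\boldsymbol{h}^\dagger)\boldsymbol{a}$, noting that $(\boldsymbol{h}\boldsymbol{h}^\dagger)_{ij}=h_i h_j^\star=H_{ij}$. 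Combining the two gives $Q(\boldsymbol{a})=\boldsymbol{a}^\dagger\boldsymbol{G}\boldsymbol{a}$ with $\boldsymbol{G}$ as defined in the statement, which is exactly \eqref{max_rate_th}.

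The step I would be most careful about is checking that $\boldsymbol{G}$ is positive definite, so that the minimum over $\boldsymbol{a}\neq\boldsymbol{0}$ is actually attained on the lattice and the SVP formulation is meaningful (rather than the objective being unbounded below or degenerate). I would obtain this from Cauchy--Schwarz: $\lvert \boldsymbol{h}^\dagger\boldsymbol{a}\rvert^2\le\lVert\boldsymbol{h}\rVert^2\lVert\boldsymbol{a}\rVert^2$, so
\begin{equation}
\boldsymbol{a}^\dagger\boldsymbol{G}\boldsymbol{a} \;\ge\; \lVert\boldsymbol{a}\rVert^2\left(1-\frac{\mathsf{SNR}\,\lVert\boldsymbol{h}\rVert^2}{1+\mathsf{SNR}\,\lVert\boldsymbol{h}\rVert^2}\right)\;=\;\frac{\lVert\boldsymbol{a}\rVert^2}{1+\mathsf{SNR}\,\lVert\boldsymbol{h}\rVert^2}\;>\;0
\end{equation}
for every $\boldsymbol{a}\neq\boldsymbol{0}$. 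Hence $\boldsymbol{G}\succ 0$, and minimizing $\boldsymbol{a}^\dagger\boldsymbol{G}\boldsymbol{a}$ over $\ZZ^N\setminus\{\boldsymbol{0}\}$ (resp.\ $\ZZ[i]^N\setminus\{\boldsymbol{0}\}$) is precisely the shortest vector problem in the lattice defined by the Gram matrix $\boldsymbol{G}$, which yields \eqref{max_rate_th}. The real and complex statements are handled by the same argument; the only difference is whether $\dagger$ denotes the transpose or the Hermitian transpose, and the ambient integer ring.
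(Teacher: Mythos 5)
Your proof is correct and follows essentially the same route as the paper: rewrite the argument of the logarithm as the Hermitian (resp.\ symmetric) form $\boldsymbol{a}^\dagger\boldsymbol{G}\boldsymbol{a}$ via $\lvert\boldsymbol{h}^\dagger\boldsymbol{a}\rvert^2=\boldsymbol{a}^\dagger\boldsymbol{H}\boldsymbol{a}$, then invoke positive definiteness of $\boldsymbol{G}$. The only difference is that you actually prove $\boldsymbol{G}\succ 0$ via Cauchy--Schwarz (with the explicit bound $\boldsymbol{a}^\dagger\boldsymbol{G}\boldsymbol{a}\ge\lVert\boldsymbol{a}\rVert^2/(1+\mathsf{SNR}\lVert\boldsymbol{h}\rVert^2)$), whereas the paper merely asserts that $\boldsymbol{G}$ has $N$ strictly positive eigenvalues; your version is the more complete one.
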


\begin{proof}
Maximizing $\rcomp(\boldsymbol{h}, \boldsymbol{a})$ is equivalent to the following
minimization
\begin{equation} \label{max_rate}
 \min \limits_{\boldsymbol{a}\neq \boldsymbol{0}} \left\lbrace \lVert \boldsymbol{a} \rVert^2 +
\mathsf{SNR} \lVert \boldsymbol{h} \rVert^2 \lVert \boldsymbol{a} \rVert^2 -
\mathsf{SNR} \lvert\boldsymbol{h}^{\dagger} \boldsymbol{a}\rvert^2\right\rbrace .
\end{equation}
We can write
\begin{equation} 
 \lvert \boldsymbol{h}^\dagger \boldsymbol{a}\rvert^2 =\sum
\limits_{i,j} h_i h_j^{\star} a_i^{\star} a_j 
\end{equation}
As $\boldsymbol{H}=[H_{ij}]$, $H_{ij}=h_ih_j^{\star}$, $1 \leq i, j
\leq N$, it follows that $\sum
\limits_{i,j} h_i h_j^{\star} a_i^{\star} a_j=
\boldsymbol{a}^\dagger \boldsymbol{H} \boldsymbol{a}$. Using these notations, we
can write (\ref{max_rate}) as
\begin{equation}
 (1 + \mathsf{SNR} \lVert \boldsymbol{h} \rVert^2) \min \limits_{\boldsymbol{a}\neq \boldsymbol{0}} 
\boldsymbol{a}^\dagger \left[ \boldsymbol{I} - \dfrac{\mathsf{SNR}}{1 +
\mathsf{SNR} \lVert \boldsymbol{h} \rVert^2} \boldsymbol{H} \right]
\boldsymbol{a}.
\end{equation}
$\boldsymbol{I} - \dfrac{\mathsf{SNR}}{1 + \mathsf{SNR} \lVert
\boldsymbol{h} \rVert^2} \boldsymbol{H} $ has $N$ strictly positive eigenvalues. It is
then positive definite. Now, the problem is reduced to the minimization of
$\boldsymbol{a}^\dagger \boldsymbol{G}\boldsymbol{a}$.
\end{proof}

\begin{proposition}
 Searching for the vector $\boldsymbol{a}$ that minimizes Equation (\ref{max_rate_th}) of theorem \ref{theorem-a}
is equivalent to a ``Shortest Vector'' problem for the lattice $\Lambda$ whose Gram matrix is $\boldsymbol{G}$. 
\end{proposition}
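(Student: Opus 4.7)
The plan is to make explicit the standard identification between a positive definite quadratic form on $\ZZ^N$ (or $\ZZ[i]^N$) and the squared Euclidean norm on a lattice whose Gram matrix is that form. The proof of Theorem~\ref{theorem-a} already established that $\boldsymbol{G}$ is Hermitian positive definite, so all the ingredients are in place; what remains is essentially a bookkeeping argument.

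First, I would use the positive definiteness of $\boldsymbol{G}$ to factor it as $\boldsymbol{G} = \boldsymbol{B}^{\dagger}\boldsymbol{B}$, for example via a Cholesky decomposition (any full-rank square root would work equally well). Then I would introduce the lattice
\begin{equation}
\Lambda \;=\; \bigl\{\, \boldsymbol{B}\boldsymbol{a} \,:\, \boldsymbol{a} \in \ZZ^N \bigr\}
\end{equation}
in the real case, and the analogous object over $\ZZ[i]^N$ in the complex case. By construction, $\boldsymbol{B}$ is a basis matrix for $\Lambda$ and
\begin{equation}
\boldsymbol{a}^{\dagger}\boldsymbol{G}\boldsymbol{a} \;=\; (\boldsymbol{B}\boldsymbol{a})^{\dagger}(\boldsymbol{B}\boldsymbol{a}) \;=\; \lVert \boldsymbol{B}\boldsymbol{a} \rVert^{2},
\end{equation}
so $\boldsymbol{G}$ is precisely the Gram matrix of $\Lambda$ with respect to this basis.

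Next, I would note the bijection between nonzero integer vectors $\boldsymbol{a}$ and nonzero lattice points $\boldsymbol{B}\boldsymbol{a} \in \Lambda$, which follows from $\boldsymbol{B}$ being invertible (again a consequence of $\boldsymbol{G}$ being positive definite). Under this bijection, the minimization
\begin{equation}
\min_{\boldsymbol{a}\neq\boldsymbol{0}}\; \boldsymbol{a}^{\dagger}\boldsymbol{G}\boldsymbol{a}
\end{equation}
coincides with
\begin{equation}
\min_{\boldsymbol{v}\in\Lambda,\,\boldsymbol{v}\neq\boldsymbol{0}}\; \lVert \boldsymbol{v} \rVert^{2},
\end{equation}
which is by definition the Shortest Vector Problem on $\Lambda$, and the minimizing $\boldsymbol{a}$ is recovered as $\boldsymbol{B}^{-1}\boldsymbol{v}$.

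There is no genuine obstacle here: the only nontrivial input is the positive definiteness of $\boldsymbol{G}$, which was already proved in Theorem~\ref{theorem-a}, and everything else is a restatement in lattice language. If I wanted to polish the argument, the one point worth emphasizing is that the equivalence is constructive, so any SVP solver (such as LLL followed by an enumeration in the reduced basis, or Fincke--Pohst) applied to $\boldsymbol{B}$ yields the rate-maximizing coefficient vector directly.
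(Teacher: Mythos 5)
Your proof is correct and follows essentially the same route as the paper: the paper simply asserts that a positive definite Hermitian (resp.\ symmetric) matrix is the Gram matrix of a lattice and that the minimization then becomes a shortest-vector search, while you make that identification explicit via a Cholesky factorization $\boldsymbol{G}=\boldsymbol{B}^{\dagger}\boldsymbol{B}$ and the bijection $\boldsymbol{a}\mapsto\boldsymbol{B}\boldsymbol{a}$. Your version just fills in the standard bookkeeping the paper leaves implicit.
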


\begin{proof}
As $\boldsymbol{G}$ is a definite positive hermitian (resp. symmetric) matrix, it is the Gram matrix of a lattice $\Lambda$.
This lattice is either a $\ZZ[i]-$ lattice in the complex case, or a $\ZZ -$ lattice in the real case. 
Then, the minimization problem in theorem \ref{theorem-a} is equivalent to find a non zero vector in $\Lambda$ with shortest length. 
\end{proof}

Algorithms for solving this problem are given in \cite{Algo_Short_vect}. The
best known one is the Fincke-Pohst algorithm \cite{Algo_Fin_Poh}.

\subsection{Recovering Linear Equations}
The relay aims to decode a linear equation of the transmitted messages
and passes it to the destination or another relay. After calculating the vector
$\boldsymbol{a}$ as in (\ref{max_rate_th}), the
relay recovers a linear combination of the transmitted signal $x_1$ and $x_2$.
We rewrite the received signal at the relay in the following form
\begin{equation}\label{in_out}
 y = \lambda + \xi_1 x_1 + \xi_2 x_2 + z
\end{equation}
where $\lambda$ is an integer, $\xi_i = h_i - a_i$ and $z$ is the additive white
noise.
The recovered linear equation $\lambda = a_1 x_1 + a_2 x_2$ is a linear
Diophantine equation. This equation admits the following solutions.

\subsection{Solution of the Linear Diophantine Equation}
If $\lambda$ is a multiple of the greatest common divisor (gcd) of $a_1$ and $a_2$,
then the Diophantine equation has an infinite number of solutions. The
\textit{Extended Euclid Algorithm} allows to exhibit a particular solution
$(u_1,u_2)$ to $a_1x_1+a_2x_2=g$ \cite{Algo_CLRS}. 
The set of all solutions is obtained as follows
\begin{equation}\label{dioph_sol}
\begin{cases}
 x_1=\frac{u_1}{g}\lambda + \frac{a_2}{g}k \\
 x_2=\frac{u_2}{g}\lambda - \frac{a_1}{g}k
\end{cases} 
\end{equation}
$g=a_1\wedge a_2$ is the gcd of $a_1$ and $a_2$, $k \in \mathbb{Z}$.

\subsection{Decoding Metric}
The Maximum Likelihood decoder maximizes $p(y/\lambda)$ over all
possible values of $\lambda$.  The conditional
probability $p(y/\lambda)$ can be expressed as,
\begin{equation}\label{prob_error_1}
 p(y/\lambda) = \sum_{\underset{a_1 x_1 + a_2 x_2 = \lambda}{(x_1, x_2)}}
p(y/x_1,x_2) p(x_1, x_2)
\end{equation}
where 
\begin{equation}\label{prob_error_2}
 p(y/x_1, x_2) \propto \exp \left[ -\frac{(y - h_1 x_1 - h_2 x_2)^2}{2 \sigma^2}
\right] 
\end{equation}
and $x_1$, $x_2$ are (\textit{a priori}) equiprobable and given by (\ref{dioph_sol}). The decoding rule is now to find, 
\begin{equation}\label{ML}
\hat{\lambda}=\arg \max \limits_{\lambda}\varrho(\lambda):= \sum \limits_{k=-\infty}^{+\infty} \exp
\left[ -\frac{(y - \beta \lambda + k \alpha)^2}{2 \sigma^2} \right]
\end{equation}
where $\beta = \frac{1}{g}\left( h_1u_1+h_2u_2\right)$, $\alpha = \frac{1}{g}\left(
h_2a_1-h_1a_2\right)$.

In \cite{GL_MR}, it has been proved that, for $\lambda\in\RR$, 
$\varrho(\lambda)$ achieves its maximum for
\[
\lambda\in \frac{\alpha}{\beta} \ZZ + \frac{y}{\beta}, 
\]
i.e. for all values of $\lambda$ such that $y - \beta \lambda + k \alpha=0$. 
Since we want to maximize $\varrho(\lambda)$ for $\lambda\in\ZZ$, the solution is given 
by the integer-valued couple $(\lambda,k)$ minimizing $\left|y-\beta \lambda + k \alpha\right|$. 
Thus, since $x_1,x_2\in\mathcal{S}$ which is a finite subset of $\ZZ$ and verify Equation (\ref{dioph_sol}), we state a new minimization 
problem which is equivalent to (\ref{ML}),
\begin{equation}
 \hat{\lambda}=\arg \limits_{\lambda} \min_{\underset{x_1,x_2\in\mathcal{S}}{(\lambda,k)}}\left|y-\beta \lambda + k \alpha\right|.
\end{equation}
The problem is therefore equivalent to the minimization of 
\begin{equation}\label{IDA}
 F(k,\lambda) =  \left| k \alpha' - \lambda + y'\right|
\end{equation}
$\alpha'=\alpha/\beta$ and $y'=y/\beta$. 
The minimization is called
\textit{Inhomogeneous Diophantine Approximation} in the absolute sense. It consists of finding the best approximation of a real number $\alpha'$ by
a rational number $\lambda/k$, $k$ $\in$ $\mathbb{N}$, given an additional real
shift $y'$, while keeping the denominator $k$ as small as possible. In the general settings for such problems,
an error approximation function $F(k,\lambda)$ is set and it is stated 
that a rational number $\lambda/k$ is the Best Diophantine Approximation if, for all other rational numbers $\lambda'/k'$ 
\begin{equation}
 k' \leq k \Rightarrow  F(k',\lambda') \geq F(k,\lambda).
\end{equation}
In our case, in addition to the error approximation function, limits are
imposed by the finite constellation $\SC$ to which the transmitted symbols belong. The algorithms used to
find the best Diophantine approximations of real numbers are in general simple
and easy to implement. The best known one is the Cassel's algorithm
\cite{IDA_Cassels}. In \cite{CLark_Th}, the authors develop and compare several
ones.
\vspace{-0.1cm}
\section{Numerical Results}
In the simulations, the set of symbols is of the form $\mathcal{S} = \left\lbrace
-s_m, \hdots, s_m \right\rbrace $. We consider two sources transmitting $x_1$
and $x_2$, and one relay recovering a linear equation of $x_1$ and $x_2$ with
integer coefficients.

At first, based on its CSI, the relay finds the vector
$\boldsymbol{a}$ as the shortest vector described in theorem \ref{theorem-a}.
Then, the relay finds a particular solution of the linear Diophantine equation
$a_1 x_1 + a_2 x_2 = g$ using the \textit{Extended Euclid} algorithm. Finally,
the relay searches for the couple $(k, \lambda)$ which gives the
best inhomogeneous Diophantine approximation by minimizing the function $F$ defined in (\ref{IDA}).
\begin{figure}[ht]
\centering
\includegraphics[width=0.52\linewidth, angle=0]{./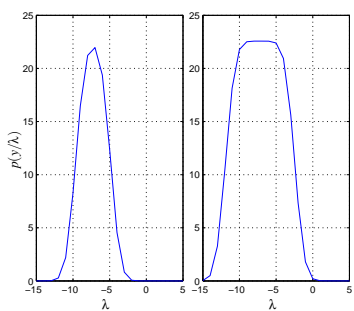}
\vspace{-0.3cm}
\caption{$p(y/\lambda)$ for $\boldsymbol{h}=[-1.274 \hspace{0.1cm} 0.602]^T$, $\boldsymbol{a}=[2 \hspace{0.1cm} -1]^T$, 
$\mathsf{SNR}=40dB$, $x_1=-2$ and $x_2=3$. $p(y/\lambda)$ is maximized for one value, $\lambda=-7$ in the left subfigure 
while it is maximized for several values of $\lambda$ in the right one.}
\vspace{-0.2cm}
\label{fig:prob_func}
\end{figure}

In Figure \ref{fig:IDA_vs_Dec}, we show the error probability of our system for
three different constellations $\SC$, defined by $s_m=5, 7, 10$, respectively. For $s_m=5$ or
less, the diversity order of the system is 1 for real entries (which would correspond to a diversity order equal to 2 with complex symbols). 
For $s_m > 6$, the diversity order collapses to $1/2$. 
\begin{figure}[ht]
\centering
\includegraphics[width=0.65\linewidth, angle=270]{./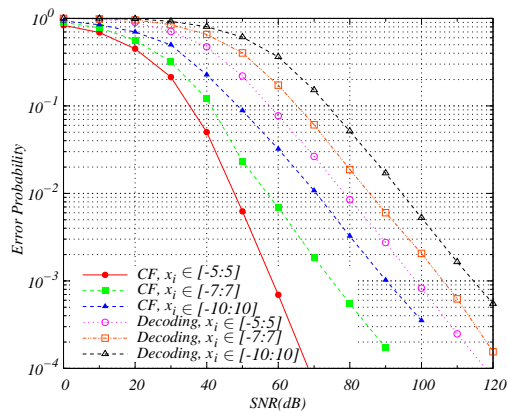}
\caption{Error Probability using the Inhomogeneous Diophantine Approximation
versus decoding both symbols $x_1$ and $x_2$.}
\label{fig:IDA_vs_Dec}
\end{figure}
This is due to the fact that $p(y/\lambda)$ is constant, as a function of $\lambda$, on a bigger interval
giving rise to ambiguities as shown in Figure \ref{fig:prob_func}. Still in 
Figure \ref{fig:IDA_vs_Dec}, we plotted the error probability for when the relay decodes both symbols $x_1$ and $x_2$.
The diversity order in this case is $1/2$ for all values of $s_m$.
For the case of complex-valued channels and symbols, we expect a doubled value of all the
diversity orders.

\section{Conclusion}
In this paper, we considered the Compute-and-Forward scheme with real-valued channels. 
We provided a method for maximizing the transmission rate and developed a decoding 
strategy. Numerical results showed the performance of our decoding method. 
We believe that it is a first step towards a rich and fruitful multidimensional approach. 

\end{document}